\newtheorem{theorem}{Theorem}
\newtheorem{remark}{Remark}
\newtheorem{lemma}{Lemma}
\newtheorem{corollary}{Corollary}
\newcommand{\Z}{\mathbb{Z}}
\newcommand{\R}{\mathbb{R}}
\newcommand{\lp}[1]{\mbox{LPL}(n,#1)}
\begin{document}

\title{The set of dimensions for which there are no linear perfect $2$-error-correcting Lee codes has positive density}
\author{Claudio~Qureshi}
\author{
    \IEEEauthorblockN{Claudio~Qureshi\IEEEauthorrefmark{1}}\\
    {\small
    \IEEEauthorblockA{\IEEEauthorrefmark{1}Institute of Mathematics, Statistics and Scientific Computing, University of Campinas, Brazil. 
    \\cqureshi@ime.unicamp.br} \\
}}

\maketitle

\vspace{-1.4cm}
\begin{abstract}
The Golomb-Welch conjecture states that there are no perfect $e$-error-correcting Lee codes in $\Z^n$ ($\mbox{PL}(n,e)$-codes) whenever $n\geq 3$ and $e\geq 2$. A special case of this conjecture is when $e=2$. In a recent paper of A. Campello, S. Costa and the author of this paper, it is proved that the set $\mathcal{N}$ of dimensions $n\geq 3$ for which there are no linear $\mbox{PL}(n,2)$-codes is infinite and $\#\{n \in \mathcal{N}: n\leq x\} \geq \frac{x}{3\ln(x)/2} (1+o(1))$. In this paper we present a simple and elementary argument which allows to improve the above result to $\#\{n \in \mathcal{N}: n\leq x\} \geq \frac{4x}{25} (1+o(1))$. In particular, this implies that the set $\mathcal{N}$ has positive (lower) density in $\Z^+$.
\end{abstract}

\section{Introduction and Preliminaries}\label{SectionIntro}

The Lee metric was introduced in 1958 by C. Y. Lee for signal transition over certain noisy channels \cite{Lee58}. This is one of the most important metrics considered in error-correcting codes due to several applications such as constrained and partial-response channels \cite{RS94}, flash memory \cite{Schwartz12}, interleaving schemes \cite{BBV98}, multidimensional burst-error-correction \cite{EY09}, among others.

Let $x=(x_1,\ldots, x_n)$ and $y=(y_1,\ldots, y_n)$ two words of $\Z_q^n$ ($q\geq 2$) or $\Z^n$. The Lee metric is defined as follows.
$$d(x,y)=\left\{ \begin{array}{ll}
\sum_{i=1}^n \min\left(|x_i-y_i|, m-|x_i-y_i| \right) & \textrm{if }x,y \in \Z_q^n, \\
\sum_{i=1}^n |x_i-y_i| & \textrm{if }x,y \in \Z^n.
\end{array} \right.$$

There is a correspondence between $e$-error-correcting Lee codes in $\Z_q^n$ and $e$-error-correcting Lee codes in $\Z^n$ whenever $q\geq 2e+1$ (this case is know as "large alphabet") via the so called Construction A \cite{CS13}. This correspondence preserve perfect Lee codes. In the seminal paper of S. W. Golomb and L. R. Welch \cite{GW70} the authors studied perfect Lee codes in $\Z^n$. They presented several results including the construction of $\mbox{PL}(2,e)$-codes for every $e\geq 1$ and the construction of $\mbox{PL}(n,1)$-codes for every $n\geq 1$. In this paper the authors raised their famous conjecture (known as the Golomb-Welch conjecture) that for $e\geq 2$ and $n\geq 3$ there are no $\mbox{PL}(n,e)$-codes.

This conjecture is one of the most important open problems in the area of perfect Lee codes. In spite of great effort and several papers with partial results towards the conjecture, it is believed that the Golomb-Welch conjecture is far to being solved. A recent survey on this conjecture is given in \cite{HK18}.

This paper focus on the case $e=2$ of the Golomb-Welch conjecture, restricted to linear codes. This case is believed to be the most difficult case as pointed out in \cite{HK18}. For dimensions $n=3,4,5$ the non-existence of $\mbox{PL}(n,e)$-codes for every $e\geq 2$ can be proved from the non-existence of $\mbox{PL}(n,2)$-codes \cite{Horak09a}. Horak and Grosek proved in \cite{HG14} the non-existence of linear $\mbox{PL}(n,2)$-codes for dimensions $n\leq 12$. Kim introduce a new strategy in \cite{Kim17} and prove the non-existence of $\mbox{PL}(n,2)$-codes for some values of $n$ which is expected to be infinite, provided that $2n^2 +2n+1$ is a prime satisfying certain conditions. Let $\mathcal{N}$ denote the set of dimensions $n\geq 3$ such that there are no linear $\mbox{PL}(n,2)$-codes. In \cite{QCC18}, the authors prove that the set $\mathcal{N}$ is infinite via the existence of infinitely many friendly primes. The proof is long but mostly elementary and follows some ideas introduced in \cite{Kim17}, however the proof of the existence of the  infinitely many friendly primes is based on a (non-elementary) very strong result of Wiertelak \cite{Wiertelak78} on the distribution of certain classes of primes. The authors of \cite{QCC18} also obtain an estimative for the size of $\mathcal{N}$: $\#\{n \in \mathcal{N}: n\leq x\} \geq \frac{x}{3\ln(x)/2} (1+o(1))$. This estimative is enough to prove that $\mathcal{N}$ has infinitely many elements but it is not enough to prove that this set has positive lower density. We recall that the lower density (in this paper we refer it simply as \emph{density}) of a subset $\mathcal{N}\subseteq \Z^+$ is given by $d(\mathcal{N}) = \liminf_{n\to \infty} \#\{n \in \mathcal{N}: n\leq x\}/x$. In this paper we show a simple elementary argument based on congruences to prove not only that $\mathcal{N}$ has infinitely many elements but also that $\mathcal{N}$ has positive density, greater or equal than $4/25$.

\section{The main result}

We denote by $\lp{e}$ the set of all linear $\mbox{PL}(n,e)$-codes and by $B^n(e)$ the $n$-dimensional Lee ball with radius $e$ and centered at the origin, that is $B^n(e)=\{x\in\Z^n: d(x,0)\leq e\}$. The cardinality of the ball is given by $\#B^n(e)= \sum_{i=0}^{\min\{n,e\}} 2^i \binom{n}{i} \binom{e}{i}$ \cite{GW70}. In particular $\#B^n(2)=2n^2+2n+1$. The proof of our non-existence theorem has three main ingredients. The first is a criterion of Horak and AlBdaiwi for the non-existence of perfect Lee codes.

\begin{lemma}[\cite{HA12a}, Theorem 6]\label{Lemma1}
$ \lp{e}\neq \emptyset$ if and only if there is an abelian group $G$ and a homomorphism $\phi : \Z^n \rightarrow G$ such that $\phi|_{B^{n}(e)}: B^n(e)\rightarrow G$ is a bijection.
\end{lemma}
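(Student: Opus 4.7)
\textbf{Proof proposal for Lemma \ref{Lemma1}.} My plan is to prove the two implications by interpreting a linear perfect Lee code as the kernel of the homomorphism in question, so that the tiling condition translates directly into the bijectivity of the restriction. The starting observation is that a linear code $C\in\lp{e}$ is, by definition, a sublattice of $\Z^n$ whose translates $c+B^n(e)$ ($c\in C$) partition $\Z^n$; equivalently, every $x\in\Z^n$ has a unique representation $x=c+b$ with $c\in C$ and $b\in B^n(e)$.

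For the forward implication I would take $G:=\Z^n/C$ and let $\phi\colon\Z^n\to G$ be the canonical projection, which is a homomorphism of abelian groups since $C$ is a subgroup. Bijectivity of $\phi|_{B^n(e)}$ then becomes a restatement of the tiling property: surjectivity follows because every coset $x+C$ contains at least one ball-representative $b\in B^n(e)$ (using the fact that the translates $c+B^n(e)$ cover $\Z^n$), and injectivity follows because two distinct elements of $B^n(e)$ congruent modulo $C$ would force two translates $c_1+B^n(e)$ and $c_2+B^n(e)$ to overlap, contradicting the partition.

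For the reverse implication I would set $C:=\ker(\phi)$, which is automatically a sublattice of $\Z^n$ and hence a linear code. Given any $x\in\Z^n$, bijectivity of $\phi|_{B^n(e)}$ supplies a unique $b\in B^n(e)$ with $\phi(b)=\phi(x)$; then $c:=x-b\in\ker(\phi)=C$ gives an expression $x=c+b$, and uniqueness of this expression (across all $c\in C$ and $b\in B^n(e)$) follows from the uniqueness of $b$. This is exactly the statement that the balls $c+B^n(e)$ tile $\Z^n$, so $C$ is a linear perfect $e$-error-correcting Lee code.

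The argument is essentially bookkeeping once one identifies the right group and homomorphism, so I do not anticipate a real obstacle. The only point that warrants care is making sure that bijectivity of $\phi|_{B^n(e)}$ (as opposed to just injectivity or surjectivity) is what encodes both the packing and covering properties of a perfect code; injectivity corresponds to the packing condition (non-overlapping balls) and surjectivity to the covering condition, and both are needed to conclude that the translates $c+B^n(e)$ form a partition of $\Z^n$.
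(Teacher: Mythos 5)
Your proof is correct, but note that the paper does not prove Lemma \ref{Lemma1} at all: it is imported verbatim as Theorem 6 of Horak and AlBdaiwi \cite{HA12a}, so there is no internal proof to compare against. Your argument supplies the standard correspondence on which that cited result rests: a linear code $C\in\lp{e}$ is a sublattice whose translates $c+B^n(e)$ tile $\Z^n$; taking $G=\Z^n/C$ with the canonical projection, surjectivity of $\phi|_{B^n(e)}$ encodes the covering property and injectivity encodes the packing property, while in the converse direction $C=\ker(\phi)$ recovers the tiling via the unique decomposition $x=c+b$ with $c\in\ker(\phi)$ and $b\in B^n(e)$. All steps check out; the only point you leave implicit, and which is harmless, is that bijectivity of $\phi|_{B^n(e)}$ onto $G$ forces $G$ to be finite (of order $\#B^n(e)$), so $\ker(\phi)$ automatically has full rank and finite index, which is consistent with the code being a genuine lattice tiling of $\Z^n$.
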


The second ingredient is a simple result about abelian groups and epimorphisms (i.e. surjective homomorphism of groups).

\begin{lemma}\label{Lemma2}
If $G$ is an abelian group with order $|G|=mp$ with $p$ prime and $p \nmid m$ then there is an epimorphism $\psi: G \rightarrow \Z_p$. In particular $\psi$ is an $m$-to-$1$ map.
\end{lemma}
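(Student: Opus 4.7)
The plan is to rely on the primary decomposition of finite abelian groups. Since $|G|=mp$ with $p$ prime and $p\nmid m$, the prime $p$ divides $|G|$ to exactly the first power, so the Sylow $p$-subgroup $G_p$ of $G$ has order $p$, and the set $H$ of elements of $G$ whose order divides $m$ is a subgroup of order $m$. Because $G$ is abelian and $\gcd(|G_p|,|H|)=\gcd(p,m)=1$, the internal direct sum decomposition
\[
G \;=\; H \oplus G_p
\]
holds. In particular $G_p \cong \Z_p$ via some isomorphism, call it $\iota$, since any group of prime order is cyclic.

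Given this decomposition, I would define $\psi\colon G\to\Z_p$ as $\psi(h+g_p)=\iota(g_p)$ for $h\in H$ and $g_p\in G_p$, i.e. the projection onto the $p$-primary factor followed by $\iota$. This is a group homomorphism since projection onto a direct summand is, and it is surjective because $\iota$ is. Its kernel is exactly $H$, of size $m$, hence $\psi$ is constant on cosets of $H$ and each fibre has cardinality $|H|=m$; this is precisely the $m$-to-$1$ property.

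The only place where there is anything to check is the internal direct sum decomposition, and the main potential obstacle is simply being explicit about why $H\cap G_p=\{0\}$ and $H+G_p=G$; both follow from Lagrange and B\'ezout applied to $\gcd(m,p)=1$: an element in $H\cap G_p$ has order dividing both $m$ and $p$, so is trivial, and writing $1=am+bp$ with $a,b\in\Z$ lets one decompose any $g\in G$ as $g=(am)g+(bp)g$, with $(am)g\in G_p$ (killed by $p$, since $pg$ has order dividing $m$ and multiplication by $am\equiv 1\pmod p$ absorbs that part) and $(bp)g\in H$ (killed by $m$ by a symmetric argument). Equivalently, one could skip the decomposition and just quote the structure theorem for finite abelian groups, which delivers the splitting directly. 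Either way the argument is short and there is no serious obstacle.
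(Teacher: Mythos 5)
Your proof is correct and takes essentially the same route as the paper: split $G$ as a direct sum of a copy of $\Z_p$ and a complement of order $m$, project onto the $\Z_p$ factor, and read off that the kernel has size $m$, so the map is $m$-to-$1$. The only difference is that you construct the splitting by hand (B\'ezout together with the primary/Sylow decomposition) rather than citing the structure theorem for finite abelian groups directly, a shortcut you yourself note is equivalent and which is exactly what the paper does.
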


\begin{proof}
By the fundamental theorem of finite abelian groups there is a group isomorphism $\phi: G \rightarrow \Z_p \times G'$ where $G'$ is an abelian group of order $m$. Let $\pi: \Z_p \times G' \rightarrow \Z_p$ given by $\pi(x,y)=x$. Since both $\phi$ and $\pi$ are epimorphisms, their composition $\psi = \pi\circ \phi : G \rightarrow \Z_p$ is also an epimorphism. The groups $\Z_p$ and $G/\ker(\psi)$ are isomorphic (by the first isomorphism theorem for groups), thus they have the same cardinality. Hence $\#\ker(\psi)=\#G/ \#\Z_p = m$ and since $\psi$ is an epimorphism, it is an $m$-to-$1$ map. 
\end{proof}

The third ingredient is a formula of D. Kim (see the proof of Theorem 4 of \cite{Kim17}).

\begin{lemma}\label{Lemma3}
Let $p$ be a prime number, $k$ be a positive integer, $x=(x_1,\cdots, x_n)\in \Z_p^n$ and $Q_k(x)$ be given by
\begin{align*}
Q_k(x) :=& \sum_{i=1}^{n}\left((x_i)^{2k}+(-x_1)^{2k}+ (2x_i)^{2k} + (-2x_i)^{2k} \right)\\ &+ \sum_{1\leq i < j \leq n} \left( (x_i+x_j)^{2k} + (x_i-x_j)^{2k} +  (-x_i+x_j)^{2k} +  (-x_i-x_j)^{2k}  \right).
\end{align*}
Then $Q_k(x)$ can be expressed in terms of the power sums $S_{2t}(x)=\sum_{i=1}^{n}x_i^{2t}$ $1\leq t\leq k$ as
$$Q_k(x)=(4^{k}+4n+2)S_{2k} + 2 \sum_{t=1}^{k-1}\binom{2k}{2t} S_{2t}S_{2(k-t)}.$$
\end{lemma}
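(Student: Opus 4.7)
The statement is an identity, so the plan is a direct symbolic computation broken into two parts (the diagonal sum and the off-diagonal sum).

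First, I would simplify the diagonal part. Since $2k$ is even, $(-x_i)^{2k}=x_i^{2k}$ and $(-2x_i)^{2k}=(2x_i)^{2k}=4^k x_i^{2k}$, so
\[
\sum_{i=1}^{n}\bigl((x_i)^{2k}+(-x_i)^{2k}+(2x_i)^{2k}+(-2x_i)^{2k}\bigr)=(2+2\cdot 4^{k})S_{2k}.
\]
For the off-diagonal part, again using that $2k$ is even, $(-x_i+x_j)^{2k}=(x_i-x_j)^{2k}$ and $(-x_i-x_j)^{2k}=(x_i+x_j)^{2k}$, so the four-term inner sum collapses to $2(x_i+x_j)^{2k}+2(x_i-x_j)^{2k}$.

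Next, I would apply the binomial theorem and use the parity cancellation
\[
(x_i+x_j)^{2k}+(x_i-x_j)^{2k}=2\sum_{s=0}^{k}\binom{2k}{2s}x_i^{2s}x_j^{2k-2s}.
\]
Because the resulting expression is symmetric in $i$ and $j$, I can convert $\sum_{1\le i<j\le n}$ into $\frac12\sum_{i\neq j}$, and then use the elementary identity
\[
\sum_{i\neq j} x_i^{2s}x_j^{2(k-s)}=S_{2s}\,S_{2(k-s)}-S_{2k}\qquad (0\le s\le k),
\]
valid uniformly at the endpoints since $S_0=n$. This turns the off-diagonal contribution into
\[
2\sum_{s=0}^{k}\binom{2k}{2s}\bigl(S_{2s}S_{2(k-s)}-S_{2k}\bigr).
\]

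Finally, I would isolate the boundary terms $s=0$ and $s=k$ (which together contribute $2(n-1)S_{2k}\cdot 2 = 4(n-1)S_{2k}$ after accounting for the leading factor of $2$) and collect the middle terms as $2\sum_{t=1}^{k-1}\binom{2k}{2t}S_{2t}S_{2(k-t)}$. The only nontrivial piece of arithmetic is the coefficient of $S_{2k}$, which requires the standard identity $\sum_{s=0}^{k}\binom{2k}{2s}=2^{2k-1}=4^{k}/2$; subtracting the $s=0,k$ contributions gives $\sum_{s=1}^{k-1}\binom{2k}{2s}=4^k/2-2$. Combining everything, the coefficient of $S_{2k}$ becomes
\[
(2+2\cdot 4^{k})+4(n-1)-(4^{k}-4)=4^{k}+4n+2,
\]
which is exactly the claimed formula. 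The only place a mistake is likely is in the $S_{2k}$ bookkeeping across the three sources (diagonal, boundary of the off-diagonal sum, and the $-S_{2k}$ correction in the $\sum_{i\neq j}$ identity), so I would double-check those on a small case (e.g.\ $k=1$, where the formula reduces to $Q_1(x)=(4n+6)S_2$) before finalizing.
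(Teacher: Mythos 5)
Your computation is correct and complete. The paper itself does not prove this lemma at all---it simply cites the proof of Theorem 4 in Kim's paper---so your direct expansion supplies a self-contained derivation of exactly the kind Kim's argument rests on. All the delicate points check out: the inner sum $\sum_{s=0}^{k}\binom{2k}{2s}x_i^{2s}x_j^{2(k-s)}$ is indeed symmetric in $i,j$ (termwise it is not, but the reindexing $s\mapsto k-s$ together with $\binom{2k}{2s}=\binom{2k}{2(k-s)}$ fixes this), the identity $\sum_{i\neq j}x_i^{2s}x_j^{2(k-s)}=S_{2s}S_{2(k-s)}-S_{2k}$ is valid at the endpoints with $S_0=n$, and the bookkeeping of the $S_{2k}$ coefficient via $\sum_{s=0}^{k}\binom{2k}{2s}=2^{2k-1}$ gives $(2+2\cdot 4^{k})+4(n-1)-(4^{k}-4)=4^{k}+4n+2$ with no double counting of the boundary terms $s=0,k$. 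Your sanity checks against $k=1$ and (implicitly, via Remark~\ref{Remark1}) $k=2$ agree with the paper. Two minor remarks: the term $(-x_1)^{2k}$ in the statement is a typo for $(-x_i)^{2k}$, which you silently and correctly assumed; and although the lemma is stated in $\Z_p$, your argument is a polynomial identity valid over any commutative ring, which is all that is needed when the paper later reduces modulo $5$.
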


\begin{remark}\label{Remark1}
We will use the Kim's formula only with $k=1$ and $k=2$. In these cases we have:
\begin{equation}
\left\{ \begin{array}{l}
Q_1(x) = (4n+6) S_2(x)   \\
Q_2(x) = (4n+18) S_4(x) + 12\cdot S_2(x)^2
\end{array}    \right.
\end{equation}
\end{remark}

\begin{theorem}\label{TheoremMain}
$\lp{2}=\emptyset$ if $n\equiv 8, 13, 18, 23 \pmod{25}$.
\end{theorem}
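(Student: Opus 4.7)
The plan is to derive a contradiction from the assumption that $\lp{2}\neq\emptyset$, by combining the three lemmas and isolating an arithmetic obstruction modulo $5$. First, by Lemma~\ref{Lemma1} there is an abelian group $G$ of order $|B^n(2)|=2n^2+2n+1$ and a homomorphism $\phi:\Z^n\to G$ whose restriction to the Lee ball is a bijection. Writing $n=5k+3$ (which covers all four target residues), a short computation gives $2n^2+2n+1=5(10k^2+14k+5)$, so $|G|=5m$ with $m\equiv 4k\pmod 5$. For $n\equiv 8,13,18,23\pmod{25}$ we have $k\not\equiv 0\pmod 5$, hence $\gcd(m,5)=1$, and Lemma~\ref{Lemma2} with $p=5$ supplies an epimorphism $\psi:G\to\Z_5$ that is $m$-to-$1$.

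Setting $\pi:=\psi\circ\phi:\Z^n\to\Z_5$, the restriction $\pi|_{B^n(2)}$ is uniformly $m$-to-$1$ onto $\Z_5$. The key trick is to compute the power sums $T_k:=\sum_{x\in B^n(2)}\pi(x)^{2k}\in\Z_5$ in two ways. On the one hand, uniform fibres give $T_k\equiv m\sum_{y\in\Z_5}y^{2k}\pmod 5$. On the other hand, writing $a_i:=\pi(e_i)\in\Z_5$ and $a=(a_1,\ldots,a_n)$, enumeration over the $1+4n+4\binom{n}{2}$ points of the ball (origin, $\pm e_i$, $\pm 2e_i$, $\pm e_i\pm e_j$) identifies $T_k$ with $Q_k(a)\bmod 5$ in exactly the form of Lemma~\ref{Lemma3} (the origin contributes $0$ for $k\geq 1$).

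I then plug in $k=1,2$ using Remark~\ref{Remark1}, together with $n\equiv 3\pmod 5$, which makes $4n+6\equiv 3$ and, crucially, $4n+18\equiv 0\pmod 5$. For $k=1$, $\sum_{y\in\Z_5}y^2\equiv 0\pmod 5$, so $3\,S_2(a)\equiv 0$, forcing $S_2(a)\equiv 0\pmod 5$. For $k=2$, $\sum_{y\in\Z_5}y^4\equiv -1\pmod 5$, so $(4n+18)\,S_4(a)+12\,S_2(a)^2\equiv -m\pmod 5$; both terms on the left vanish, forcing $m\equiv 0\pmod 5$ and contradicting $\gcd(m,5)=1$. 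The main obstacle is conceptual rather than computational: one has to spot that $p=5$ is the right prime for Lemma~\ref{Lemma2} and that $k=2$ is the right exponent for Kim's identity, because the coincidence $4n+18\equiv 0\pmod 5$ when $n\equiv 3\pmod 5$ kills the $S_4$-term and leaves a clean incompatibility with $m\not\equiv 0\pmod 5$. The exclusion of the fifth residue $n\equiv 3\pmod{25}$ arises naturally because there $25\mid|G|$ and Lemma~\ref{Lemma2} no longer applies.
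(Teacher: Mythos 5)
Your proposal is correct and follows essentially the same route as the paper: Lemma~\ref{Lemma1}, then Lemma~\ref{Lemma2} with $p=5$ (justified by $5\,\|\,2n^2+2n+1$ for these residues), then Kim's identity with $k=1,2$ and $n\equiv 3\pmod 5$ to force the contradictory congruences. The only cosmetic differences are that you treat all four residues uniformly via $n=5k+3$, $k\not\equiv 0\pmod 5$ (the paper does $n\equiv 8\pmod{25}$ and notes the rest are similar) and phrase the contradiction as $m\equiv 0\pmod 5$ rather than as incompatible conclusions about $S_2(x)$; these are equivalent.
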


\begin{proof}
We only prove the result for the case $n\equiv 8 \pmod{25}$ since the other cases are similar. By contradiction, suppose that $\lp{2}\neq \emptyset$ for some positive integer $n\equiv 8 \pmod{25}$. Let $B$ denote the Lee ball $B=B^{n}(2)$. By Lemma \ref{Lemma1} there is an abelian group $G$ and an homomorphism $\phi: \Z^n \rightarrow G$ such that the restricted map $\phi|_{B}: B \rightarrow G$ is a bijection. The order of $G$ is given by $|G|=\#B^{n}(2)=2n^2+2n+1\equiv 20 \pmod{25}$. Thus, we can write $|G|=5m$ with $5\nmid m$. By lemma \ref{Lemma2} there is an $m$-to-$1$ homomorphism $\psi:G \rightarrow \Z_{5}$. We define $\hat{\phi}=\psi\circ \phi: \Z^n \rightarrow \Z_5$ and $x_i:= \hat{\phi}(e_i) \in \Z_5$ for $1\leq i \leq n$, where $\{e_1,\ldots, e_n\}$ is the standard basis of $\R^n$. Since $\hat{\phi}|_{B}: B \rightarrow \Z_5$ is an $m$-to-$1$ map, the multiset $\{0\} \cup \{ x_i, 2x_i : 1\leq i \leq n  \} \cup \{\pm x_i \pm x_j: 1\leq i < j \leq n\}$ contains exactly $m$ times every element of $\Z_5$. By Lemma \ref{Lemma3}, Remark \ref{Remark1} and using that $n\equiv 3 \pmod{5}$  we have:
$$ \left\{\begin{array}{l}
38\cdot S_2(x) \equiv m\cdot (0^2+1^2+\cdots+4^2)\equiv 0 \pmod{5}\\
50\cdot S_4(x) + 12\cdot S_2(x)^2 \equiv m\cdot (0^4+1^4+\cdots+4^4)\equiv 4m \pmod{5}
\end{array}  \right.   $$
This is a contradiction since the first congruence implies $S_2(x)\equiv 0 \pmod{5}$ and the second congruence implies $S_2(x)\not\equiv 0 \pmod{5}$.
\end{proof}

\begin{corollary}
The set $\mathcal{N}:=\{n\geq 3: \textrm{there are no linear $\mbox{PL}(n,e)$-codes}\}$ has positive density $d(\mathcal{N})\geq 4/25$. In particular $\mathcal{N}$ has infinitely many elements.
\end{corollary}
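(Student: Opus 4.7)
The proof is a short deduction from Theorem~\ref{TheoremMain}. The plan is to (i) exhibit an explicit subset of $\mathcal{N}$ consisting of residue classes modulo $25$, and (ii) count its elements to extract the density.

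First I would read off from Theorem~\ref{TheoremMain} that every integer $n\geq 3$ satisfying $n \equiv 8,\ 13,\ 18,$ or $23 \pmod{25}$ belongs to $\mathcal{N}$. These four residues are pairwise distinct, hence the four associated arithmetic progressions are disjoint and each contributes independently to the count.

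Next I would perform the elementary counting. For each fixed residue $r \in \{8,13,18,23\}$, the number of positive integers $n\leq x$ with $n\equiv r \pmod{25}$ equals $x/25 + O(1)$. Summing the four disjoint contributions yields
\[
\#\{n \in \mathcal{N}: n\leq x\} \geq \frac{4x}{25} + O(1).
\]
Dividing by $x$ and taking $\liminf_{x\to\infty}$ gives $d(\mathcal{N}) \geq 4/25$; since a set of positive lower density is a fortiori infinite, the ``in particular'' clause follows for free.

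There is no genuine obstacle here: Theorem~\ref{TheoremMain} has absorbed all the difficulty, and the corollary is effectively a restatement via elementary density counting. The only minor point to verify is the disjointness of the four residue classes, which is immediate from the fact that $8,13,18,23$ are distinct modulo $25$, so that the counts add without double-counting.
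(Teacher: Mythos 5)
Your proposal is correct and is exactly the argument the paper intends: the corollary is an immediate consequence of Theorem~\ref{TheoremMain}, since the four disjoint residue classes $n\equiv 8,13,18,23 \pmod{25}$ each have density $1/25$, giving lower density at least $4/25$ and hence infinitude. Nothing further is needed beyond the elementary counting you carried out.
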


\section*{Acknowledgement}

The author was supported by FAPESP under grants 2015/26420-1 and 2013/25977-7.

\bibliographystyle{alpha}

\end{document}